\providecommand{\setZ}{\mathbb{Z}}
\providecommand{\setQ}{\mathbb{Q}}
\providecommand{\setR}{\mathbb{R}}
\newcommand {\R}	  {\mathbb{R}}
\newcommand {\N}	  {\mathbb{N}}
\newcommand {\op}[1]	  {\operatorname{#1}}
\newcommand {\shortspace} {\vspace{4mm}}
\newcommand {\norm}[1]	  {\left\|#1\right\|}
\newcommand {\infnorm}[1] {\left\|#1\right\|_\infty}
\newcommand {\CVP} {\ensuremath{\op{CVP}}}
\newtheorem {theorem}	  {Theorem} [section]
\newtheorem {lemma}	  [theorem] {Lemma}
\newtheorem {corollary}   [theorem] {Corollary}
\begin{document}

\title{Covering Cubes and the Closest Vector Problem} 
\author{Friedrich Eisenbrand\thanks{E-mail: \texttt{friedrich.eisenbrand@epfl.ch}},
Nicolai Hähnle\thanks{E-mail: \texttt{nicolai.haehnle@epfl.ch}},
Martin Niemeier\thanks{E-mail: \texttt{martin.niemeier@epfl.ch}}\\
EPFL, Switzerland}
\date{\today}
\maketitle

\begin{abstract}

\noindent 
We provide the currently fastest randomized $(1+\varepsilon)$-approximation algorithm for  the
\emph{closest vector problem} in the $\ell_\infty$-norm. The running time of
our method depends on the dimension $n$ and the approximation guarantee
$\varepsilon$ by  $2^{O(n)}(\log1/\varepsilon)^{O(n)}$ which improves upon the
$(2+1/\varepsilon)^{O(n)}$ running time of the previously best algorithm by 
Blömer and Naewe.


Our algorithm is based on a solution of the following geometric
covering problem that is of interest of its own: Given $\varepsilon\in(0,1)$,
how many ellipsoids are necessary to cover the cube
$[-1+\varepsilon, 1-\varepsilon]^n$ such that all ellipsoids are contained in the
standard unit cube $[-1,1]^n$? We provide an almost optimal bound for
the case where the ellipsoids are restricted to be axis-parallel.

We then apply our covering scheme to a variation of this covering
problem where one wants to cover $[-1+\varepsilon,1-\varepsilon]^n$  with parallelepipeds that, if
scaled by two, are still contained in the unit cube.  Thereby, we
obtain a method to boost any $2$-approximation algorithm for
closest-vector in the $\ell_\infty$-norm to a $(1+\varepsilon)$-approximation
algorithm that has the desired running time.



\addtocounter{page}{-1}
\thispagestyle{empty}
\end{abstract}

\shortspace
\begin{center}
  \begin{tikzpicture}[scale=0.45]
    \def\eps{0.1}
    \def\width{10}
    \def\reducedopacity{0.2}

    \pgfmathparse{(1+2/(1.41421356-1))}\pgfmathresult \let\r\pgfmathresult

      \draw[very thick,color=white,fill=white] (-\width,-\width) rectangle  (\width,\width);

     \foreach \x in {0,1,2} {
        \foreach \y in {0,1,2} {           
	   \pgfmathparse{max(\x, \y)*30+20}\pgfmathresult \let\hue\pgfmathresult
           
	   \pgfmathparse{\width*(\r^(-\x-1)) - \width}\pgfmathresult \let\lowerX\pgfmathresult
           \pgfmathparse{\width*(\r^(-\y-1)) - \width}\pgfmathresult \let\lowerY\pgfmathresult
           \pgfmathparse{\width*(\r^(-\x)) - \width}\pgfmathresult \let\upperX\pgfmathresult
           \pgfmathparse{\width*(\r^(-\y)) - \width}\pgfmathresult \let\upperY\pgfmathresult

           \draw[thin,draw=none,fill=gray!\hue] (\lowerX, \lowerY) rectangle (\upperX, \upperY);
	   \draw[thin,draw=none,fill=gray!\hue,fill opacity=\reducedopacity] (-\lowerX, -\lowerY) rectangle (-\upperX, -\upperY);
	   \draw[thin,draw=none,fill=gray!\hue,fill opacity=\reducedopacity] (-\lowerX, \lowerY) rectangle (-\upperX, \upperY);
	   \draw[thin,draw=none,fill=gray!\hue,fill opacity=\reducedopacity] (\lowerX, -\lowerY) rectangle (\upperX, -\upperY);
       }
     }

  \foreach \x in {0,1,2} {
        \foreach \y in {0,1,2} {
           \pgfmathparse{max(\x, \y)*30+60}\pgfmathresult \let\hue\pgfmathresult

           \pgfmathparse{\width*(\r^(-\x-1)) - \width}\pgfmathresult \let\lowerX\pgfmathresult
           \pgfmathparse{\width*(\r^(-\y-1)) - \width}\pgfmathresult \let\lowerY\pgfmathresult
           \pgfmathparse{\width*(\r^(-\x)) - \width}\pgfmathresult \let\upperX\pgfmathresult
           \pgfmathparse{\width*(\r^(-\y)) - \width}\pgfmathresult \let\upperY\pgfmathresult

           \pgfmathparse{(\lowerX+\upperX)/2}\pgfmathresult \let\centerX\pgfmathresult
	   \pgfmathparse{(\lowerY+\upperY)/2}\pgfmathresult \let\centerY\pgfmathresult

	   \pgfmathparse{\centerX+\width}\pgfmathresult \let\radX\pgfmathresult
           \pgfmathparse{\centerY+\width}\pgfmathresult \let\radY\pgfmathresult

           \draw[thick,color=black!\hue] (\centerX, \centerY) ellipse ({\radX}cm and {\radY}cm);
	   \draw[thick,color=black!\hue,draw opacity=\reducedopacity] (-\centerX, -\centerY) ellipse ({\radX}cm and {\radY}cm);
	   \draw[thick,color=black!\hue,draw opacity=\reducedopacity] (-\centerX, \centerY) ellipse ({\radX}cm and {\radY}cm);
	   \draw[thick,color=black!\hue,draw opacity=\reducedopacity] (\centerX, -\centerY) ellipse ({\radX}cm and {\radY}cm);
       }
     }

    \draw[thick] (-\width,-\width) rectangle  (\width,\width);
    \draw[dashed] (-\width+\eps,-\width+\eps) rectangle  (\width-\eps,\width-\eps);

   \draw[->] (-\width-1,0) -- (1+\width,0);
    \draw[->] (0,-\width-1) -- (0,1+\width);
    \draw[ultra thick] (-0.1,-\width) -- (0.3,-\width) node[below right] {$-1$} (-\width,-0.1) -- (-\width,0.3) node[above left] {$-1$};
    \draw[ultra thick] (0.1,\width) -- (-0.3,\width) node[above left] {$1$} (\width,0.1) -- (\width,-0.3) node[below right] {$1$};

  \end{tikzpicture}
\end{center}
\newpage

\section{Introduction}

The \emph{closest lattice vector problem ($\op{CVP}$) } is one of the
central computational problems in the \emph{geometry of numbers}.  Here, one
is given a rational \emph{lattice} $\Lambda(A) = \{ Ax \colon x \in \setZ^n\}$,
$A \in \setQ^{n× n}$ and a \emph{target vector} $t \in \setQ^n$. The task is
to compute a lattice-point in $\Lambda(A)$ that is closest to $t$ w.r.t. a
given norm.
In this paper, we focus on the closest vector problem in the
$\ell_\infty$-norm. CVP in the $\ell_\infty$-norm  is an \emph{integer programming
  problem}: Given a rational polytope $P \subseteq\setR^n$, compute
an integer point inside $P$ or assert that $P$ does not contain any
integer points. 
On the other hand, any integer programming problem as above can be
directly reduced to $\op{CVP}_\infty$ in a lattice in $m$-dimensional space,
where $m$ is the number of inequalities describing the
polytope.~\footnote{To decide if a polytope $P=\{x\in \R^n~:~Ax\leq u\}$, contains an integer point, compute a vector $l<u$ such that $P=\{x\in \R^n~:~l\leq Ax\leq u\}$. By rescaling each row we can wlog assume that $u-l=\mathbf1$.
Now define $t:=\frac{l+u}{2}$ and observe that $P$ contains an integer point iff there is a $v\in\Lambda(A)$ with $\infnorm{v-t}\leq \frac{1}{2}$. This lattice is not necessarily of full rank, but the techniques of this paper -- whose running time, like those of previous algorithms, depends on the ambient dimension -- can be applied.}
Integer programming is one of the most versatile modeling paradigms
with a wide range of applications. Thus  the closest vector
problem in the $\ell_\infty$-norm variant is particularly important.

The development of methods to solve closest-vector and integer
programming problems resulted in many deep discoveries in
geometry and algorithms. Lenstra~\cite{Lenstra83} showed that integer
programming and thus $\op{CVP}_\infty$ can be solved in polynomial time if
the dimension is fixed. His algorithm lay the first planks between the
geometry of numbers and optimization.  For varying $n$, the running
time of his method is $2^{O(n^3)}$ times a polynomial in the binary
encoding length of the input. Kannan~\cite{kannan87} presented
algorithms for these problems whose running-time dependence on $n$ is
bounded by $2^{O(n \log n)}$.  An important step forward in the quest
for a singly-exponential time algorithm was provided by Ajtai et
al.~\cite{AKS01}. They presented a $2^{O(n)}$ randomized algorithm for
the \emph{shortest vector problem} in the $\ell_2$-norm: Given a
lattice, find the shortest nonzero lattice vector. These results have
been generalized for any $\ell_p$-norm by Blömer and
Naewe~\cite{bn2009}.  Micciancio and Voulgaris~\cite{mv10} provided a
deterministic singly-exponential time algorithm both for the shortest
vector problem as well as for the closest vector problem in the
$\ell_2$-norm. Recently Dadush et al.~\cite{DPV10} have shown that the
shortest vector problem w.r.t. any norm  can be solved with a deterministic
singly-exponential time algorithm.

\subsection*{Approximation algorithms} 
\label{sec:appr-algor}

A $(1+\varepsilon)$-\emph{approximation algorithm} for the closest vector
problem computes a lattice vector whose distance to the target vector
$t$ is at most $(1+\varepsilon)$ times the minimum distance $\min\{ \|v - t \|
\colon v \in \Lambda(A)\}$.  The closest vector problem is NP-hard for any
$\ell_p$ norm~\cite{vEB81} and NP-hard to approximate within constant
factors~\cite{aroraPhD} and even almost polynomial
factors~\cite{dkrs03}.  So clearly one cannot expect to have a
\emph{polynomial-time} approximation scheme (PTAS) for closest
vector. An 
interesting problem is however to design exponential-time
approximation algorithms whose running-time dependence on the
approximation guarantee is not too large. 
Ajtai et al.~\cite{aks02} provided a $(1+\varepsilon)$-approximation algorithm
for $\op{CVP}_2$ with a running time of $2^{O(1+{1}/{\varepsilon})n}$.  Blömer
and Naewe~\cite{bn2009} could improve on this and provide  a
randomized $(1+\varepsilon)$-approximation algorithm for the closest vector
problem w.r.t. \emph{any} $\ell_p$ norm that has  a running time
of $(2+{1}/{\varepsilon})^{O(n)}$.

\medskip 
\noindent 
Our \emph{main result} is a randomized $(1+\varepsilon)$-approximation
algorithm for $\op{CVP}_\infty$ whose running time depends on $n$ and $\varepsilon$
by $2^{O(n)} (\log{1/\varepsilon})^{O(n)}$. In fact, we show that any
singly-exponential time constant factor approximation algorithm can be
strengthened  to a $(1+\varepsilon)$-approximation algorithm that, in the end, has
this running time. Using the randomized algorithm of Blömer and
Naewe~\cite{bn2009} to obtain $2$-approximate solutions, we obtain the
desired running time.

\subsection*{The covering technique}
\label{sec:covering-technique}

We now  explain how coverings of the cube by convex bodies
come into play to obtain the complexity result.  Suppose that we
have an algorithm for closest vector in the $\ell_2$-norm and we want to
apply this to (approximately) decide whether the translated
$\ell_\infty$-unit ball $B = \{ x \in \setR^n \colon \|x - t\|_\infty\leq1 \}$ contains a
lattice point in $\Lambda(A)$. More precisely, given an $\varepsilon>0$, we either
want
\begin{enumerate}[i)]
\item to find a lattice point in $B$, \label{item:1}
\item or to assert that the scaled unit ball $B' = \{ x \in \setR^n \colon \|x
  - t\|_\infty\leq1-\varepsilon \}$ does not contain a lattice point.  \label{item:2}
\end{enumerate}
One obvious idea is to determine a set of balls of radius $\varepsilon$ whose
centers lie in $B'$ and whose union covers $B'$. If we then use the
closest-vector algorithm for the $\ell_2$-norm and target-vectors being
the centers of the balls, we can solve the above problem. If one of
the calls to a  closest vector oracle returns a lattice point of
distance at most $\varepsilon$, then we are in case~\ref{item:1}). Otherwise we
are in case~\ref{item:2}). 

This relates to a classical covering problem. 
Erd\H os and Rogers~\cite{er61} (see also~\cite{MR2433777}) showed that
the space $\setR^n$ can be covered by translates of unit spheres in such
a way that no point is covered by more than $O(n \log n)$ spheres.  We
can use this to cover $[-1+\varepsilon, 1 -\varepsilon]^n$ with spheres of radius $\varepsilon$
that then will be contained in $[-1,1]^n$. The Erd\H os and Rogers
technique would yield an upper bound of $O(n \log n)\frac{(2 - 2\varepsilon)^n}{
(\varepsilon/2)^n V_n}$ where $V_n$ is the volume of the $\ell_2$-unit ball. This
yields the bound $(n/\varepsilon)^{O(n)}$ for the number of queries to the
$\op{CVP}_2$-oracle. Certainly, since the ratio of the volume of
the unit cube $[-1,1]$ to the volume of the $\ell_2$-unit ball $\{x \in \setR^n \colon
\norm{x}_2\leq1\}$ is $2^{\Theta(n \log n)}$, we cannot hope
to improve the dependency on the dimension.  
But can we improve the
dependence on $\varepsilon$?

Since an \emph{ellipsoid} is the image of the $\ell_2$-unit-ball $\{x \in
\setR^n \colon \norm{x}_2\leq1\}$ under an \emph{affine transformation} $f(x) = E\,
x +d$ for some non-singular matrix $E \in \setR^{n×n}$ and a vector $d \in
\setR^n$, the problem whether such an ellipsoid contains a lattice vector
is the closest vector problem w.r.t. the $\ell_2$-norm in the lattice
$\Lambda(E^{-1} A)$ and target vector $E^{-1} d$.
Thus, we can apply the algorithm for $\op{CVP}_2$ to decide whether an
ellipsoid contains a lattice point or not. This gives us more
flexibility for the reduction of approximate $\op{CVP}_\infty$ to
$\op{CVP}_2$. 
Consequently, if we cover $B'$ with ellipsoids that are contained in
$B$ we can solve the approximate decision problem via calls to a
$\op{CVP}_2$-oracle.  This motivates the following covering problem.

\begin{quote}  
  How many ellipsoids that are contained in $[-1,1]^n$ are needed to
  cover $[-1+\varepsilon,1-\varepsilon]^n$?
\end{quote}

\noindent 
As we  mentioned above, the volume of the cube versus the volume of an
inscribed ball shows that covering with ellipsoids cannot yield a
singly-exponential dependence of the running time on the dimension
$n$. However, a similar idea and technique is the basis of our
promised complexity result.    The image of the unit-cube
$[-1,1]^n$ under  and affine transformation $f(x) = E \, x + d$ is a
\emph{parallelepiped}. 

With a $2$-approximation algorithm for
$\op{CVP}_\infty$ one can,  for a given parallelepiped $P$  find a lattice point in $P_s$, where $P_s$ stems from $P$ via scaling by $2$
  from its center of gravity $d$, or  assert that $P$ does not contain
  a lattice point. 
More precisely this can be done by a call to a $2$-approximation algorithm on the
lattice $\Lambda(E^{-1}A)$ and 
target-vector $E^{-1} d$. This motivates the following variant of the
above described covering problem.  

\begin{quote}
How many parallelepipeds that, if scaled by $2$ from their
centers of gravity are contained in the unit cube $[-1,1]^n$, are
necessary to cover the cube $[-1+\varepsilon , 1-\varepsilon]^n$? 
\end{quote}

\noindent 
We   consider the two covering problems from above and provide the
following results.
\begin{itemize}
\item  We show that the number of required ellipsoids is  bounded by
  $2^{O(n \log n)} (1+\log 1/\varepsilon)^n$ and provide a $c_n (1 + \lfloor\log
  1/\varepsilon\rfloor)^{n-1}$ lower bound for axis-parallel ellipsoids. 
\item We show that the number of required parallelepipeds is bounded
  from above by $2^n (1 + \log 1/\varepsilon)^n$ and from below by
  $c'_n (1+\lfloor\log 1/\varepsilon\rfloor)^n$. 
\end{itemize}
The second result yields a $2^{O(n)} (\log{1/\varepsilon})^{O(n)}$ randomized
algorithm that solves the approximate decision version of closest 
vector in the $\ell_\infty$-norm. The lower bound shows that this complexity
is optimal for an algorithm relying on this covering technique alone. 
Our main result, the $2^{O(n)} (\log
1/\varepsilon)^{O(n)}$ time $(1+\varepsilon)$-approximation algorithm, is then obtained via a
binary-search technique. We explain this in the final section of our
paper.

\section{The covering problems}
\label{sec:Covering}

We now consider the two covering problems from the introduction. We
denote the cube $[-1,1]^n$ by $H$ and its scaled version
$[-1+\varepsilon,1-\varepsilon]^n$ by $H_\varepsilon$. The questions are again as follows. 
Given an $\varepsilon\in(0,1)$, what is the smallest number $E(n,\varepsilon)$ 
of ellipsoids contained in $H$ such that their union covers the
smaller cube $H_\varepsilon$? What is the smallest number $P(n,\varepsilon)$ of
parallelepipeds that are contained in $H$ after being scaled by $2$
and whose union covers $H_\varepsilon$?

\subsection{Covering with ellipsoids}






We first show that $E(n,\varepsilon)$ is bounded by $2^{cn \log n}
(1+\log1/\varepsilon)^n$. Since we can allow us a factor of $2^n$, we cover
each intersection of $H_\varepsilon$ with an orthant separately and then
combine the different coverings, see also the figure on the
title-page. After flipping coordinates and after translation, the
problem for one orthant can be interpreted as follows.
How many ellipsoids that are contained in $H':=[0,2]^n$ are needed to
cover the cube $[\varepsilon,1]^n$? 

The following elementary lemma (see also Figure~\ref{fig:box-ball}) is
used in our construction. 
\begin{lemma} \label{lemma:box-ball} Let $n\geq 2$, $r = 1 +
  2/(\sqrt{n}-1)$ and $Q := [1/r,1]^n$, then the smallest ball
  containing  $Q$ is contained in $H'$. Furthermore,  $r$ is maximal
  with this   property.
\end{lemma}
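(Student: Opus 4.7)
The plan is to explicitly compute the smallest ball $B$ that contains the cube $Q = [1/r, 1]^n$, then check when $B \subseteq H' = [0,2]^n$. Since $Q$ is an axis-aligned cube of side length $1 - 1/r$, its smallest enclosing ball is centered at the centroid $c = \bigl(\tfrac{1 + 1/r}{2}, \dots, \tfrac{1 + 1/r}{2}\bigr)$ with radius equal to the distance from $c$ to any vertex of $Q$, namely $\rho = \tfrac{\sqrt{n}}{2}\bigl(1 - 1/r\bigr)$. (That this is indeed the smallest enclosing ball follows from the fact that the $2^n$ vertices of $Q$ are equidistant from $c$ and span $\setR^n$ affinely.)

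Since both $c$ and $B$ are symmetric in all coordinates, the inclusion $B \subseteq [0,2]^n$ is equivalent to the two scalar conditions
\begin{equation*}
\tfrac{1+1/r}{2} - \tfrac{\sqrt{n}}{2}\bigl(1 - 1/r\bigr) \;\geq\; 0
\quad\text{and}\quad
\tfrac{1+1/r}{2} + \tfrac{\sqrt{n}}{2}\bigl(1 - 1/r\bigr) \;\leq\; 2.
\end{equation*}
A direct manipulation shows that the first condition is equivalent to $r \leq (\sqrt{n}+1)/(\sqrt{n}-1) = 1 + 2/(\sqrt{n}-1)$, so at the claimed value of $r$ it holds with equality. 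Substituting $1/r = (\sqrt{n}-1)/(\sqrt{n}+1)$ then gives $c_i = \rho = \sqrt{n}/(\sqrt{n}+1)$, so the ball $B$ is tangent to each coordinate hyperplane $\{x_i = 0\}$ from the inside.

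For the second condition at this value of $r$, one gets $c_i + \rho = 2\sqrt{n}/(\sqrt{n}+1) < 2$, so the upper-side constraint is slack. Hence $B \subseteq H'$, proving the first assertion. For maximality, note that $\rho - c_i$ is a strictly increasing function of $r$ on $(1, \infty)$ (as $1/r$ decreases), so any larger value of $r$ violates the first condition and pushes $B$ strictly below the hyperplane $\{x_i = 0\}$. The step that needs a little care is only the algebraic simplification; there is no real obstacle.
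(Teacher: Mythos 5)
Your proof is correct and follows essentially the same route as the paper's: both compute the centroid $d\cdot\mathbf{1}$ and circumradius of $Q$ explicitly, observe that at $r = 1 + 2/(\sqrt{n}-1)$ one gets $d = R = \sqrt{n}/(\sqrt{n}+1)$ so the ball is tangent to the coordinate hyperplanes and $d + R < 2$, and deduce both containment and maximality. You are slightly more explicit about the maximality argument (monotonicity of $\rho - c_i$ in $r$) where the paper simply notes that the ball touches the coordinate hyperplanes, but the substance is the same.
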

\begin{proof}
  The center of $Q$ and  $B$ is  $d \cdot \bf{1}$ with
$$d = \frac{1+\frac1r}{2}=\frac{1+\frac{\sqrt{n}-1}{\sqrt{n}+1}}{2} = \frac{\sqrt{n}}{\sqrt{n}+1}.$$
Thus the radius $R$ of $B$ is simply the distance of $d \cdot \bf{1}$ to the vertices of $Q$
\[ R = \sqrt{n}(1-d) =
\sqrt{n}\cdot\left(1-\frac{\sqrt{n}}{\sqrt{n}+1}\right) =
\frac{\sqrt{n}}{\sqrt{n}+1} = d. \] Thus the ball is contained in the
positive orthant. Furthermore, $d+R < 2$, which shows the first claim,
i.e. that $B\subseteq H'$. 
The choice of $r$ is maximal because the ball touches the coordinate hyperplanes.
\end{proof}
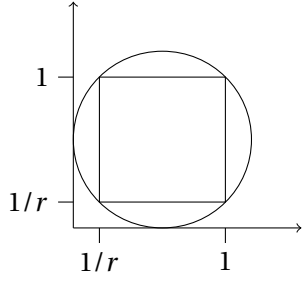
\begin{figure}
\begin{center}
  \begin{tikzpicture}[scale=2]
    \def\invr{0.1716}
    \def\R{0.5858}

    \draw[->] (0,0) -- (1.5,0);
    \draw[->] (0,0) -- (0,1.5);

    \foreach \x / \txt in {1/$1$,\invr/{$1/r$}}
      \draw (\x,0) -- (\x,-0.1) node[below] {\txt} (0,\x) -- (-0.1,\x) node[left] {\txt};

    \draw (\invr,\invr) rectangle (1,1);
    \draw (\R,\R) circle (\R);
  \end{tikzpicture}
\end{center}
\caption{Illustration of Lemma~\ref{lemma:box-ball}.}
\label{fig:box-ball}
\end{figure}

\begin{corollary} \label{corollary:box-ball}
  Let $n\geq 2$, $r = 1 +
  2/(\sqrt{n}-1)$,  $v\in(0,1]^n$ and let $Q := [v_1r^{-1},v_1]×\ldots×[v_nr^{-1},v_n]$.
  Then there exists an axis-parallel ellipsoid $E$ such that $Q\subseteq E\subseteq H'$.
\end{corollary}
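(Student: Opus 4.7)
The plan is to reduce to the case $v = \mathbf{1}$ handled by Lemma~\ref{lemma:box-ball}, by using a diagonal coordinate scaling. Concretely, let $B$ be the ball produced in Lemma~\ref{lemma:box-ball}, so that $[1/r,1]^n \subseteq B \subseteq H'$, and consider the diagonal linear map
\[
T \colon \R^n \to \R^n, \qquad T(x) = (v_1 x_1, v_2 x_2, \ldots, v_n x_n).
\]
Since $v \in (0,1]^n$, the map $T$ is a non-singular axis-aligned contraction.

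First I would observe that $T$ sends the reference box $[1/r,1]^n$ onto $Q$: indeed, $T([1/r,1]^n) = [v_1/r, v_1] \times \cdots \times [v_n/r, v_n] = Q$. Therefore the image $E := T(B)$ contains $Q$. Because $T$ is a diagonal linear map, the image of a Euclidean ball under $T$ is an \emph{axis-parallel} ellipsoid, which is precisely the type required by the statement.

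It then remains to verify that $E \subseteq H'$. Since $B \subseteq H' = [0,2]^n$ and each factor $v_i$ lies in $(0,1]$, the scaling contracts each coordinate: $T([0,2]^n) = [0, 2v_1] \times \cdots \times [0, 2v_n] \subseteq [0,2]^n = H'$. Consequently
\[
Q \subseteq E = T(B) \subseteq T(H') \subseteq H',
\]
which establishes the corollary. I do not expect any genuine obstacle here; the only thing to be mindful of is that ``axis-parallel'' must be preserved, and this is exactly why a \emph{diagonal} scaling (rather than an arbitrary affine map) is used.
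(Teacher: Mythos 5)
Your proof is correct and is exactly the scaling argument the paper has in mind (the paper merely states ``This corollary is obtained from Lemma~\ref{lemma:box-ball} by scaling'' without spelling out the details). The diagonal map $T$, the observation that it sends $[1/r,1]^n$ onto $Q$ and a Euclidean ball to an axis-parallel ellipsoid, and the contraction argument $T(H')\subseteq H'$ are precisely what is needed.
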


This corollary is obtained from Lemma~\ref{lemma:box-ball} by
scaling.  We are now ready to prove the upper bound.

\begin{theorem}
  \label{th:UpperBound}One has 
  $E(n, \varepsilon) \leq 2^{cn\log n}\cdot \left(1+ \log1/\varepsilon\right)^n$ for a fixed constant $c>0$.
\end{theorem}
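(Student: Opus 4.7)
The plan is to reduce to a single orthant and then build the covering from a grid of boxes of geometric aspect ratio supplied by Corollary~\ref{corollary:box-ball}.

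First, I would split $H_\varepsilon$ into its $2^n$ intersections with the closed orthants. By reflecting signs, each piece can be mapped to the set $[\varepsilon,1]^n$, while the condition "ellipsoid contained in $H=[-1,1]^n$" becomes "ellipsoid contained in $H'=[0,2]^n$". So covering $H_\varepsilon$ by ellipsoids in $H$ reduces, at a multiplicative cost of $2^n$, to covering $[\varepsilon,1]^n$ by ellipsoids in $H'$.

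Next I would construct the grid. Set $r = 1+2/(\sqrt n -1)$ and $k = \lceil \log_r(1/\varepsilon)\rceil$, and let $V = \{1, r^{-1}, r^{-2}, \ldots, r^{-(k-1)}\}$. For each tuple $v\in V^n$ define the axis-parallel box $Q_v = [v_1/r,v_1]\times\cdots\times [v_n/r,v_n]$. By the choice of $k$, the intervals $[v_i/r,v_i]$ for $v_i\in V$ cover $[\varepsilon,1]$ in each coordinate, so the family $\{Q_v : v\in V^n\}$ covers $[\varepsilon,1]^n$. By Corollary~\ref{corollary:box-ball}, each $Q_v$ sits inside an axis-parallel ellipsoid $E_v\subseteq H'$. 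This yields $|V|^n \leq (1+\log_r(1/\varepsilon))^n$ ellipsoids for one orthant.

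The main quantitative step, and the only nontrivial bit of analysis, is estimating $\log r$ from below. Using $\log(1+x) \geq x/2$ for $x\in (0,1]$ (or a similar explicit inequality for larger $x$ when $n$ is small), one gets
\[
\log r = \log\!\left(1+\tfrac{2}{\sqrt n-1}\right) \;\geq\; \tfrac{c}{\sqrt n}
\]
for an absolute constant $c>0$. Hence $\log_r(1/\varepsilon) = \log(1/\varepsilon)/\log r \leq C\sqrt n\,\log(1/\varepsilon)$, and therefore
\[
|V|^n \;\leq\; \bigl(1+C\sqrt n\,(1+\log 1/\varepsilon)\bigr)^n \;\leq\; (2C\sqrt n)^n\,(1+\log 1/\varepsilon)^n \;=\; 2^{O(n\log n)}\,(1+\log 1/\varepsilon)^n.
\]

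Putting the pieces together, the total number of ellipsoids is at most $2^n\cdot |V|^n$, which is still of the form $2^{cn\log n}(1+\log 1/\varepsilon)^n$ after enlarging the constant $c$. The only obstacle worth flagging is keeping the $(1+\log 1/\varepsilon)^n$ factor clean: we must absorb the $\sqrt n$ that appears inside the per-coordinate count into the $2^{cn\log n}$ prefactor rather than letting it multiply $\log(1/\varepsilon)$, which is exactly what the bound $(a+b)^n \leq (2\max(a,b))^n \leq (2a)^n + (2b)^n$ style estimate above achieves.
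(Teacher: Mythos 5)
Your proposal is correct and follows essentially the same approach as the paper: reduce to one orthant at a $2^n$ cost, tile $[\varepsilon,1]^n$ by axis-parallel boxes of geometric aspect ratio $r = 1 + 2/(\sqrt n - 1)$ (circumscribing each with an ellipsoid in $H'$ via Corollary~\ref{corollary:box-ball}), and bound the number of boxes using $\log r = \Omega(1/\sqrt n)$. The paper indexes the boxes by $\alpha\in\N_0^n$ with $Q(\alpha)=\prod_j[r^{-(\alpha_j+1)},r^{-\alpha_j}]$ while you index by tuples from a finite set $V$, but this is the same grid of boxes, and the final absorption of the $\sqrt n$ factor into the $2^{cn\log n}$ prefactor matches the paper's computation.
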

\begin{proof}
  We provide a covering of $[\varepsilon,1]^n$ by ellipsoids contained in $H' =
  [0,2]^n$. 
  Let $r=1+2/(\sqrt{n}-1)$ as in
  Corollary~\ref{corollary:box-ball}. The smallest ellipsoid
  containing a box of the form 
  $$
  Q(\alpha) = \left[r^{-(\alpha_1+1)},r^{-\alpha_1}\right] × \cdots ×
  \left[r^{-(\alpha_n+1)},r^{\alpha_n}\right], \quad \alpha\in\N_0^n
  $$
  is contained in $H'$. 
  How many of these boxes are needed to cover the cube $[\varepsilon,1]^n$? 

  It is enough to consider those boxes $Q(\alpha)$ with $r^{-\alpha_j}>\varepsilon$ for
  all $j$. Taking logarithms, one obtains $\alpha_j \log r < \log1/\varepsilon$. A
  standard approximation for the logarithm yields  $\log r > c'/\sqrt{n}$
  for some constant $c'>0$, and so we can conclude $\alpha_j <
  \sqrt{n}\log(1/\varepsilon)/c'$.  In total, we require at most
\[ 
  \left(1+\frac{\sqrt n}{c'}\cdot \log 1 / \varepsilon\right)^n \leq
  \left(\frac{\sqrt{n}}{c'}\right)^n (1+\log1/\varepsilon)^n \leq 2^{c n\log n} (1+\log1/\varepsilon)^n
\]
  boxes to cover $[\varepsilon,1]^n$. Since by
  Corollary~\ref{corollary:box-ball}, each of these boxes can be
  covered by an ellipsoid contained in $H'$, this completes the proof.
\end{proof}

\subsubsection{A lower bound for axis parallel ellipsoids}
\label{sec:LowerBound}

Can the dependence on $n$ be improved?  Note that the volume of $H_\varepsilon$
is $(2-2\varepsilon)^n$, whereas the largest ellipsoid contained in $H$ is the
$n$-dimensional euclidean ball with radius $1$ centered in $0$ which
is of volume $2^{-\Omega(n\log n)}$.  So for fixed $\varepsilon \in (0,\frac12)$,
simply by accounting for volume it is clear that we need at least
$2^{\Omega(n\log n)}$ ellipsoids.


What about the dependence on $\varepsilon$? This seems to be a more difficult
question. We can prove the following.
\begin{theorem} \label{thm:ellipsoid-lower-bound} Fix the dimension
  $n\geq 2$. There exists a constant $c_n>0$, depending only on $n$,
  such that for all $\varepsilon\in(0,1)$, any covering of $H_\varepsilon$ by axis
  parallel ellipsoids contained in $H$ consists of at least
  $c_n\cdot\left(1 + \left\lfloor\log 1 / \varepsilon \right\rfloor\right)^{n-1}$
  ellipsoids.
\end{theorem}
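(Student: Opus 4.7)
The strategy is to exhibit a test set $T \subseteq H_\varepsilon$ of cardinality $\Omega_n\!\bigl((1+\lfloor \log 1/\varepsilon\rfloor)^{n-1}\bigr)$ such that every axis-parallel ellipsoid $E \subseteq H$ contains only $O_n(1)$ points of $T$; a pigeonhole argument then delivers the claimed lower bound on the number of ellipsoids.

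\textbf{Base case $n=2$.} Fix $r > 1$ and place the diagonal test points $p_k = (1-r^{-k}, 1-r^{-k})$ for $k = 0, 1, \ldots, K := \lfloor \log_r(1/\varepsilon)\rfloor$, all of which lie in $H_\varepsilon$. Given an axis-parallel ellipse $E \subseteq H$ with centre $c$ and semi-axes $(a_1, a_2)$, set $A_j := (1-c_j)/a_j$ and $B_j := 1/a_j$; note $A_j \geq 1$ because $c_j + a_j \leq 1$. Containment $p_k \in E$ becomes the quadratic inequality $\sum_j (A_j - t B_j)^2 \leq 1$ in $t := r^{-k}$, whose solution set is an interval $[t_-, t_+]$ (or empty). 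A short computation using Cauchy--Schwarz ($S_1^2 \leq S_0 S_2$ with $S_0 = \sum A_j^2$, $S_1 = \sum A_j B_j$, $S_2 = \sum B_j^2$) together with $S_0 \geq 2$ gives $t_+/t_- \leq 3 + 2\sqrt 2$, a constant independent of $E$ and $\varepsilon$. Since $(r^{-k})_k$ is geometric, only $O(1)$ of its terms lie in $[t_-, t_+]$, which gives the desired $O(1)$ bound per ellipse and the $\Omega(\log 1/\varepsilon)$ lower bound on the number of ellipses.

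\textbf{Inductive step $n \geq 3$.} Assuming the bound in dimension $n-1$, slice $H_\varepsilon$ perpendicular to the $x_n$-axis at $M = \Theta(\log 1/\varepsilon)$ geometrically spaced positions $s_\ell \in [-1+\varepsilon, 1-\varepsilon]$. Each slice $\{x_n = s_\ell\} \cap H_\varepsilon$ is a translate of $[-1+\varepsilon, 1-\varepsilon]^{n-1}$, and each non-empty cross-section $E_i \cap \{x_n = s_\ell\}$ is an axis-parallel $(n-1)$-dimensional ellipsoid contained in $[-1,1]^{n-1}$. By the inductive hypothesis, covering each slice requires at least $c_{n-1}(1+\lfloor \log 1/\varepsilon\rfloor)^{n-2}$ such cross-sections, so double counting (ellipsoid, slice-it-meets) pairs yields
\[
\sum_i \nu(E_i) \;\geq\; M \cdot c_{n-1}\bigl(1+\lfloor\log 1/\varepsilon\rfloor\bigr)^{n-2},
\]
where $\nu(E)$ denotes the number of slices $E$ meets nontrivially.

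\textbf{Main obstacle.} The heart of the proof is showing $\nu(E) = O_n(1)$, which combined with the display above gives the total count $N \geq \Omega_n((\log 1/\varepsilon)^{n-1})$. A naive uniform slicing fails here: an ellipsoid with $a_n$ close to $1$ may cross $\Omega(\log 1/\varepsilon)$ slices. The remedy is a multi-scale charging scheme. Partition the ellipsoids dyadically by the magnitude of $a_n$, and correspondingly group the geometrically spaced slices by distance from the nearest face $\{x_n = \pm 1\}$; charge each ellipsoid only to the slices at its own scale, where geometric spacing forces their number to be $O(1)$. Summing the inductive lower bounds across the $\Theta(\log 1/\varepsilon)$ scales recovers the missing factor of $\log 1/\varepsilon$ and completes the induction.
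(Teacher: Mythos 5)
Your base case $n=2$ is correct and genuinely different from the paper's argument. You place a geometric sequence of test points along the main diagonal and use Cauchy--Schwarz plus the constraint $A_j \geq 1$ (i.e.\ $c_j + a_j \leq 1$) to bound the ratio $t_+/t_-$ of the two roots of the resulting quadratic by $3+2\sqrt2$, which is a clean and correct way to show each axis-parallel ellipse in $H$ contains $O(1)$ diagonal test points. The paper instead places a full grid $G_\varepsilon = \{v : v_j = 2^{-\alpha_j} \geq \varepsilon\}$ in the positive orthant (after an affine transformation), normalizes ellipsoids so each touches every coordinate hyperplane, and observes that in the sum $\sum_j (1 - 2^{\mu_{ij}-\alpha_j})^2 \leq 1$ at most one summand can exceed $1/2$; this bounds the number of grid points per ellipsoid by $n\,3^{n-1}(1+\lfloor\log 1/\varepsilon\rfloor)$, and dividing $|G_\varepsilon| = (1+\lfloor\log 1/\varepsilon\rfloor)^n$ by this quantity gives the result in one step for all $n$ simultaneously. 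That single counting argument is what replaces your induction.

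Your inductive step for $n \geq 3$ has a real gap, and you correctly identify where it is. The display $\sum_i \nu(E_i) \geq M\cdot c_{n-1}(1+\lfloor\log 1/\varepsilon\rfloor)^{n-2}$ is fine, but without $\nu(E_i) = O_n(1)$ it yields nothing, and (as you note) that bound is false: an ellipsoid with $a_n$ comparable to $1$ crosses $\Omega(\log 1/\varepsilon)$ geometrically-spaced slices. The proposed repair --- dyadically bucketing ellipsoids by $a_n$ and charging each ellipsoid only to same-scale slices --- is not a complete argument, and it does not obviously go through. The missing step is to justify applying the $(n-1)$-dimensional lower bound to \emph{only} the same-scale cross-sections: a slice at scale $k$ is covered by cross-sections of ellipsoids from \emph{all} scales, so discarding the off-scale cross-sections may leave the slice uncovered, at which point the inductive hypothesis no longer applies. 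One would need a quantitative claim such as ``cross-sections from off-scale ellipsoids are so small that they cannot cover more than a constant fraction of the slice with $o((\log 1/\varepsilon)^{n-2})$ pieces,'' and that claim, while plausible, is neither stated nor proved. In effect the multi-scale bookkeeping is trying to reconstruct, per slice and per scale, the exact information that the paper's grid-point count extracts in one shot via the ``at most one summand exceeds $1/2$'' observation. As written, the proposal proves the $n=2$ case and sketches but does not prove the general case.
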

\begin{proof}
  To simplify the argument, we again transform the problem so that we can work entirely within the positive orthant. Consider the grid
  \[
    G_\varepsilon :=
      \{ v\in\R^n \mid v_j = 2^{-\alpha_j} \geq \varepsilon\text{ with } \alpha_j\in\N_0 \text{ for all } 1\leq j \leq n \}
  \]
  Every covering of $H_\varepsilon$ using $m$ axis parallel ellipsoids contained in $H$ corresponds, by an affine transformation, to a covering of $G_\varepsilon$ using axis parallel ellipsoids $E_1$,\dots,$E_m\subset\R^n_{\geq0}$. Because we can grow the ellipsoids until they touch all coordinate hyperplanes, we can assume without loss of generality that
  \[ E_i = \left\{ x\in\R^n \mid \sum_{j=1}^n \left(\frac{2^{-\mu_{ij}}-x_j}{2^{-\mu_{ij}}}\right)^2 \leq 1 \right\}, \]
  where the center of $E_i$ is at $(2^{-\mu_{i1}},\ldots,2^{-\mu_{in}})$.

  We will proceed to give an upper bound on the number $|E_i \cap G_\varepsilon|$ of grid points contained in an ellipsoid. Let $v = (2^{-\alpha_j})_{j=1}^n\in E_i \cap  G_\varepsilon$.
  \[
    1 \geq \sum_{j=1}^n \left( \frac{2^{-\mu_{ij}} - 2^{-\alpha_j}}{2^{-\mu_{ij}}} \right)^2
      = \sum_{j=1}^n \left( 1 - 2^{\mu_{ij} - \alpha_j} \right)^2
  \]
  At most one summand -- say the $k$-th -- can be greater than one half. We then must have $(1-2^{\mu_{ij} - \alpha_j})^2 \leq \frac{1}{2}$ for all $j\neq k$. A rough calculation shows $-2 < \mu_{ij} - \alpha_j < 1$, so there are at most $3$ possible choices of $\alpha_j\in\N_0$ for every $j\neq k$. On the other hand, $\alpha_k$ can take any integer value between $0$ and $\left\lfloor \log \frac1\varepsilon \right\rfloor$. Finally, there are $n$ choices for $k$, giving the upper bound of
  \[
    |E_i \cap G_\varepsilon| \leq n3^{n-1}\left(1+\left\lfloor\log 1 / \varepsilon \right\rfloor\right).
  \]
  Combining this with the total number of grid points, we get
  \[
    \left(1+\left\lfloor\log 1 / \varepsilon \right\rfloor\right)^n = |G_\varepsilon| \leq \sum_{i=1}^m |E_i\cap G_\varepsilon|
      \leq m n3^{n-1}\left(1+\left\lfloor\log 1 / \varepsilon \right\rfloor\right).
  \]
  The statement of the theorem follows, with $c_n = (n3^{n-1})^{-1}$.
\end{proof}
Note that this proof only works for axis parallel ellipsoids. It seems
implausible that allowing arbitrary ellipsoids could yield
significantly more efficient coverings.



\subsection{Covering with parallelepipeds}
\label{sec:CoveringWithParallelepipeds}

The goal is to cover $H_\varepsilon = [-1+\varepsilon, 1-\varepsilon]^n$ by parallelepipeds that,
if scaled by $2$, are contained in $H = [-1,1]^n$. The smallest number
of such parallelepipeds is $P(n,\varepsilon)$. We again provide an
axis-parallel covering. This time, however, we derive a lower
bound that is asymptotically  tight in the exponent,  even for
non-axis-parallel parallelepipeds.
We remark that the results of this sections hold with only minor numerical changes for
any constant scaling factor. We fix the factor $2$ for concreteness and to simplify the presentation.
First, we need an elementary
lemma whose proof is straightforward. See Figure~\ref{fig:CoveringWithParallelepipeds} for an illustration. 

\begin{lemma}
  \label{lem:1}
  Let $v \in (0,1]^n$ and $U =
  [1-v_1,1-v_1/3]×\dots×[1-v_n,1-v_n/3]$. If $U$ is scaled by a factor
  of $2$ from its center of gravity, then it is still contained in
  $[-1,1]^n$. 
\end{lemma}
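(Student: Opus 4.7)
The plan is to verify the containment coordinate by coordinate, which reduces to a one-dimensional computation because both $U$ and $[-1,1]^n$ are axis-parallel boxes, and scaling by a factor of $2$ from the center of gravity acts independently on each coordinate.

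First I would identify the center of gravity of $U$. In coordinate $j$, the interval is $[1-v_j,\, 1-v_j/3]$, so its midpoint is
\[
c_j \;=\; \frac{(1-v_j)+(1-v_j/3)}{2} \;=\; 1-\tfrac{2}{3}v_j,
\]
and the width is $(1-v_j/3)-(1-v_j) = \tfrac{2}{3}v_j$. Scaling $U$ by the factor $2$ from $c := (c_1,\ldots,c_n)$ doubles each width, so the $j$-th interval of the scaled box $U_s$ becomes $[c_j - \tfrac{2}{3}v_j,\, c_j + \tfrac{2}{3}v_j] = [1-\tfrac{4}{3}v_j,\, 1]$.

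Then I would check $U_s \subseteq [-1,1]^n$ coordinate-wise. The upper endpoint $1$ already satisfies $1 \leq 1$. For the lower endpoint, since $v_j \in (0,1]$ we have $\tfrac{4}{3}v_j \leq \tfrac{4}{3} \leq 2$, hence $1-\tfrac{4}{3}v_j \geq -1$. Thus every coordinate interval of $U_s$ lies inside $[-1,1]$, which gives $U_s \subseteq [-1,1]^n$ as required.

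There is essentially no obstacle here: the statement is a one-line computation once the center of gravity and widths are written down, and the constraint $v_j \leq 1$ is precisely what is needed to stay within the cube. The only mild care is to keep the direction of scaling straight (towards $1$, away from $-1$), which is why writing out $c_j$ and the new interval explicitly is the cleanest way to present the argument.
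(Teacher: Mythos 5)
Your computation is correct: the scaled box in coordinate $j$ is $[1-\tfrac{4}{3}v_j,\,1]$, which lies in $[-1,1]$ since $v_j\leq 1$. The paper states that the proof is straightforward and omits it (offering only a figure), so your explicit coordinate-wise calculation is exactly the intended argument.
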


\begin{theorem}
  \label{thr:1}
  One has $P(n,\varepsilon) \leq 2^n(1+\log 1/ \varepsilon)^n$.
\end{theorem}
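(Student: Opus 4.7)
The plan is to mirror the structure of the ellipsoid upper bound in Theorem~\ref{th:UpperBound}. First I would split $H_\varepsilon$ along the coordinate hyperplanes into its $2^n$ intersections with closed orthants; since every coordinate reflection preserves both $H = [-1,1]^n$ and the property that a body scaled by $2$ about its centre still lies in $H$, it suffices to cover the single positive-orthant piece $[0,1-\varepsilon]^n$ by axis-parallel parallelepipeds meeting the scaling condition. Multiplying the resulting count by $2^n$ then handles the full cube $H_\varepsilon$.

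For the one-orthant cover I would instantiate Lemma~\ref{lem:1} on a geometric grid in base $3$. For $\alpha \in \setN_0^n$ I set $v_j := 3^{-\alpha_j} \in (0,1]$, producing the box
\[ U(\alpha) \;=\; \prod_{j=1}^n \bigl[\,1 - 3^{-\alpha_j},\; 1 - 3^{-\alpha_j-1}\,\bigr], \]
which by Lemma~\ref{lem:1} remains inside $H$ after being scaled by $2$ from its centre of gravity. The crucial observation is that, along a single axis, the intervals $[1 - 3^{-\alpha_j},\,1 - 3^{-\alpha_j-1}]$ for consecutive values of $\alpha_j \in \setN_0$ abut exactly at $1 - 3^{-\alpha_j-1}$ and hence tile $[0,1)$; taking products, the family $\{U(\alpha)\}_{\alpha\in\setN_0^n}$ tiles $[0,1)^n$.

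It remains to count. To reach coordinate $1-\varepsilon$ along axis $j$ I need $1 - 3^{-\alpha_j-1} \geq 1-\varepsilon$, i.e.\ $\alpha_j \leq \lceil\log_3(1/\varepsilon)\rceil - 1$, which leaves at most $\lceil\log_3(1/\varepsilon)\rceil \leq 1 + \log(1/\varepsilon)$ admissible values per coordinate. Hence $(1+\log 1/\varepsilon)^n$ boxes suffice per orthant and $2^n(1+\log 1/\varepsilon)^n$ in total, matching the claimed bound. I do not expect any serious obstacle: the factor $3$ appearing in Lemma~\ref{lem:1} is precisely what makes the nested boxes tile along each axis, so once the geometric grid is written down the remaining work is routine bookkeeping and a change-of-log-base.
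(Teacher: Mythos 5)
Your proof follows the paper's argument essentially step for step: decompose into $2^n$ orthants by reflection symmetry, tile $[0,1-\varepsilon]^n$ with the same base-$3$ geometric boxes $U(\alpha)$, invoke Lemma~\ref{lem:1} for the scaling property, and count the admissible indices $\alpha_j$ by a base change of logarithm. The only (cosmetic) difference is that you make explicit that the intervals tile $[0,1)$ along each axis, which the paper leaves implicit; everything else, including the $\lceil\log_3(1/\varepsilon)\rceil \le 1+\log(1/\varepsilon)$ bookkeeping, matches.
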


\begin{proof}
  We proceed by covering  $[0,1-\varepsilon]^n$ by boxes
  that, if scaled by two, are contained in
  $[-1,1]^n$. 
  Consider a  box of the form 
  \[ U(\alpha) =
  \left[1-3^{-\alpha_1},1-3^{-\alpha_1-1}\right]×\dots×\left[1-3^{-\alpha_n},1-3^{-\alpha_n-1}\right], \quad \alpha\in\N_0^n. 
  \]
  By Lemma~\ref{lem:1} these boxes are still contained in $H$ after
  they are scaled by $2$. How many of these boxes are needed to cover
  $[0,1-\varepsilon]^n$?  We only have to consider $U(\alpha)$ with $3^{-\alpha_j}>\varepsilon$
  for all $j$.  Taking logarithms, this implies $\alpha_j <
  \frac{\log(1/\varepsilon)}{\log 3}$.  Thus we need at most $(1+\log1/\varepsilon)^n$
  boxes.  Repeating the procedure for each orthant yields the desired
  bound.
\end{proof}

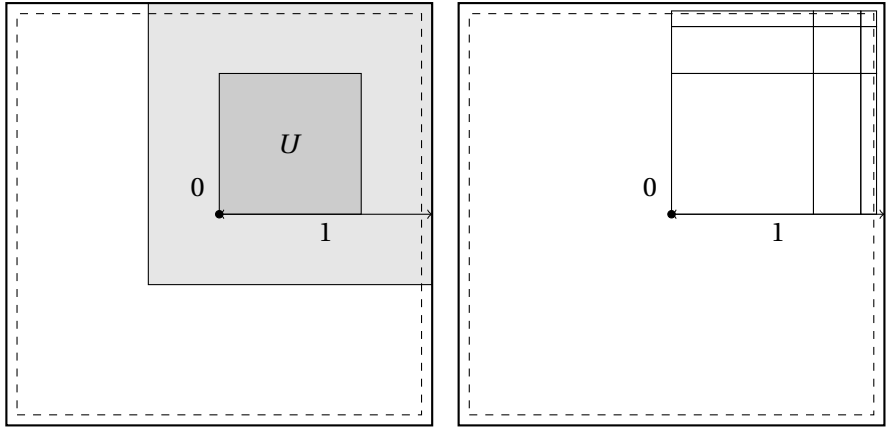
\begin{figure}
\begin{center}
  \begin{tikzpicture}[scale=0.7]
    \def\eps{0.2}
    \def\width{4}

    \draw[fill=black!10] (-\width/3,-\width/3) rectangle (\width,\width);
    \draw[fill=black!20] (0,0) rectangle (\width*2/3, \width*2/3);

    \draw (\width/3,\width/3) node {$U$};

    \node [label=120:$0$] {};
    \draw[fill=black] (0,0) circle (0.07);

    \draw[<->] (0.0,-0.0) -- node[auto, swap] {$1$} (\width,-0.0);

    \draw[thick] (-\width,-\width) rectangle (\width,\width);
    \draw[dashed] (-\width+\eps,-\width+\eps) rectangle (\width-\eps,\width-\eps);
  \end{tikzpicture}
~~
  \begin{tikzpicture}[scale=0.7]
    \def\eps{0.2}
    \def\width{4}

    \foreach \x in {\width,\width/3,\width/9}
      \foreach \y in {\width,\width/3,\width/9}
	\draw (\width-\x,\width-\y) rectangle (\width-\x/3,\width-\y/3);

    \node [label=120:$0$] {};
    \draw[fill=black] (0,0) circle (0.07);

    \draw[<->] (0.0,-0.0) -- node[auto, swap] {$1$} (\width,-0.0);

    \draw[thick] (-\width,-\width) rectangle (\width,\width);
    \draw[dashed] (-\width+\eps,-\width+\eps) rectangle (\width-\eps,\width-\eps);
  \end{tikzpicture}
\end{center}
\caption{Left: An illustration of Lemma~\ref{lem:1} for $v_1=\cdots=v_n=1$. Right: Covering one orthant with boxes of type $U(\alpha)$.}
\label{fig:CoveringWithParallelepipeds}
\end{figure}

\subsubsection{A lower bound}
The approach described in the previous section can be thought of, in a more general form, as the problem of covering the cube $H_\varepsilon$ using affine copies of a fixed centrally symmetric convex body $K$, such that constant multiples of the copies are still contained in $H$.
We will show that the number of parallelepipeds  is optimal as far as
the growth of the exponents are concerned.
The proof is analogous to that of Theorem~\ref{thm:ellipsoid-lower-bound}.
\begin{theorem}
  Let $K\subset\R^n$ be a centrally symmetric body. Let $K_1,\ldots,K_m$ be affine copies of $K$ and let $K'_j$ be the result of scaling $K_j$ by a factor of $2$ around its center point. Suppose that $K'_j\subseteq H$ for all $j$, and $K_1,\ldots,K_m$ together cover $H^\varepsilon$. Then $m \geq c_n \left(1+ \lfloor \log1/\varepsilon \rfloor \right)^n$, where $c_n > 0$ only depends on $n$.
\end{theorem}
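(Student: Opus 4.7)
My plan is to mimic the template of Theorem~\ref{thm:ellipsoid-lower-bound}: exhibit a grid $G_\varepsilon\subset H_\varepsilon$ of $(1+\lfloor\log 1/\varepsilon\rfloor)^n$ witness points, bound how many of them any single $K_j$ can contain, and conclude by pigeonhole. The new feature, compared with the ellipsoid case, is that the per-body bound must be a pure constant in $\varepsilon$ (depending only on $n$), so that pigeonhole yields the full exponent $n$ rather than $n-1$. That extra rigidity is exactly what the scaling hypothesis $K'_j\subseteq H$ together with the central symmetry of $K_j$ should deliver.

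For the grid I place witnesses near the corner $\mathbf{1}$: for $\alpha\in\N_0^n$ with $2^{-\alpha_k}\geq\varepsilon$ for all $k$, set $g_\alpha:=(1-2^{-\alpha_1},\dots,1-2^{-\alpha_n})$. These lie in $[0,1-\varepsilon]^n\subseteq H_\varepsilon$, and there are $(1+\lfloor\log 1/\varepsilon\rfloor)^n$ of them. Now fix $K_j$ with center $c_j$ and suppose $g_\alpha\in K_j$. The inclusion $K'_j\subseteq H$ applied to $g_\alpha$ gives $2g_\alpha-c_j\in H$, whose $k$-th coordinate yields $c_{j,k}\geq 1-2^{1-\alpha_k}$. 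To obtain a matching upper bound I exploit central symmetry: the point $2c_j-g_\alpha$ also belongs to $K_j$, so $3c_j-2g_\alpha=2(2c_j-g_\alpha)-c_j\in K'_j\subseteq H$, and its $k$-th coordinate gives $c_{j,k}\leq 1-2^{1-\alpha_k}/3$. Writing $\gamma=1-c_{j,k}$, the two bounds together read $2^{1-\alpha_k}/3\leq\gamma\leq 2^{1-\alpha_k}$, confining $\alpha_k$ to a real interval of length $\log_2 3<2$.

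Consequently at most two integer choices of $\alpha_k$ per coordinate are admissible, so $|K_j\cap G_\varepsilon|\leq 2^n$. Summing and applying $\sum_j |K_j\cap G_\varepsilon|\geq |G_\varepsilon|$ then gives $m\geq 2^{-n}(1+\lfloor\log 1/\varepsilon\rfloor)^n$, so $c_n=2^{-n}$ works. The main obstacle I anticipate is precisely the upper bound on $c_{j,k}$: the lower bound drops out immediately from $K'_j\subseteq H$, but without feeding the symmetric witness $2c_j-g_\alpha$ back through the scaling condition one obtains only a one-sided inequality, leaving $\Theta(\log 1/\varepsilon)$ admissible values of $\alpha_k$ per coordinate and producing only the weaker exponent $n-1$. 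Using central symmetry in this two-step way is what unlocks the correct exponent $n$, and it is the reason the hypothesis demands a centrally symmetric $K$ rather than an arbitrary convex body contained in $H$.
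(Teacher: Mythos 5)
Your proof is correct and takes essentially the same route as the paper: both exhibit a geometric grid of $(1+\lfloor\log 1/\varepsilon\rfloor)^n$ witness points near a corner of the cube, use the scaling hypothesis $K'_j\subseteq H$ for one side and central symmetry for the other to trap the center of $K_j$ in a box of multiplicative width $3$ per coordinate (an interval of length $\log_2 3$ in log-scale, hence at most two admissible grid indices per coordinate), and conclude by pigeonhole with $c_n=2^{-n}$. The paper translates to the positive orthant and bounds the grid points inside $K_j$ directly via $\tfrac12 a_j\leq v\leq\tfrac32 a_j$, while you keep the corner at $\mathbf{1}$ and constrain the center $c_j$ given a grid point; these are the same inclusion read from opposite ends, so the arguments are equivalent.
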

\begin{proof}
  By translating the given bodies, we can instead consider a situation where $[\varepsilon,1]^n$ is covered by $K_1,\ldots,K_m$ and $K'_j\subset\R^n_{\geq0}$ for all $j$. In particular, this means that the grid
  \[
    G^\varepsilon :=
      \{ v\in\R^n \mid v_j = 2^{-\alpha_j} \geq \varepsilon\text{ with } \alpha_j\in\N_0 \text{ for all } 1\leq j \leq n \}
  \]
  is covered. Let us now determine the number of grid points contained in each $K_j$.
  Let $a_j$ be the center point of $K_j$. We have
\[ K_j \subseteq \{ x\in\R^n \mid \frac12a_j \leq x \leq \frac32 a_j \}, \]
  where the first set of inequalities follows from the fact that $K'_j\subset\R^n_{\geq0}$, and the second set of inequalities follows from central symmetry of $K_j$. There are at most two choices for $\alpha_i\in\N_0$ such that $x_i = 2^{-\alpha_i}$ satisfies the corresponding lower and upper bound. Consequently, $K_j$ contains at most $2^n$ grid points. Recall that the total number of grid points is $(1 + \lfloor \log1/\varepsilon \rfloor)^n$, from which the statement of the theorem follows.
\end{proof}

\section{The approximation algorithm}
\label{sec:ApproxCVP}

We now present our $(1+\varepsilon)$-approximation algorithm for the closest vector problem in the $\ell_\infty$ norm.
We describe a boosting technique that turns any constant factor approximation algorithm for $\CVP_\infty$ into a $(1+\varepsilon)$-approximation algorithm at the expense of
an additional factor of $2^{O(n)}\left(\log 1/\varepsilon \right)^{O(n)} b^{O(1)}$ in the running time, where $b$ denotes the encoding length of the input.
It is a Karp reduction approach, i.e. the constant factor approximation algorithm is used as an oracle and called multiple times on different inputs. 

We first consider the \emph{$\alpha$-gap } $\CVP_\infty$ problem, which is defined as follows. Given a lattice $\Lambda(A)$, a target vector
$t$ and a number $D>0$, either find a lattice vector $v\in \Lambda(A)$ with $\infnorm{v-t}\leq D$, 
or assert that all lattice vectors have distance more than $\alpha^{-1} D$.
We show how to construct a $(1+\varepsilon)$-gap algorithm for $\CVP_\infty$ from a $2$-gap algorithm
using the covering with parallelepipeds described in Section~\ref{sec:CoveringWithParallelepipeds}.

Afterwards we describe a binary search procedure to obtain a $(1+\varepsilon)$-approximation algorithm,
using the $(1+\varepsilon)$-gap algorithm as an oracle in each iteration of the binary search.

We plug the currently fastest known constant approximation solver, the Blömer and Naewe (BN) algorithm~\cite{bn2009}, into our construction and boost its success probability so that we obtain the following approximation algorithm.
\begin{theorem} \label{thm:cvp-approximation-algorithm}
  For every $\varepsilon\in(0,1)$, there is a randomized algorithm that $(1+\varepsilon)$-approximates $\op{CVP}_\infty$
  in time $2^{O(n)}(\log1/\varepsilon)^{O(n)}b^{O(1)}$ with success probability $1-2^{-\Omega(n)}$.
\end{theorem}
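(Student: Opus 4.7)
The strategy is a three-step reduction. First I would build a $(1+\varepsilon)$-gap $\CVP_\infty$ oracle from a $2$-gap oracle using the parallelepiped covering of Theorem~\ref{thr:1}, then extract a $(1+\varepsilon)$-approximation by binary search on the distance threshold, and finally plug in the Blömer--Naewe algorithm~\cite{bn2009} as the constant-factor base, amplifying its success probability by independent repetition.

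For the gap reduction, given an input $(\Lambda(A), t, D)$ to the $(1+\varepsilon)$-gap problem, set $\varepsilon' = \varepsilon/(1+\varepsilon)$ so that $D/(1+\varepsilon) = (1-\varepsilon')D$. Apply Theorem~\ref{thr:1} to cover $D\cdot H_{\varepsilon'}$ by parallelepipeds $P_1,\ldots,P_N$ with $N = 2^{O(n)}(\log 1/\varepsilon)^{O(n)}$, whose $2$-dilations from their centers $d_i$ satisfy $2P_i\subseteq D\cdot H$. Choose invertible $F_i$ with $2P_i = F_i H + d_i$, so that $P_i = F_i\cdot \tfrac{1}{2}H + d_i$. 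The substitution $y = F_i^{-1}(v-t-d_i)$ converts the condition ``$v\in t + 2P_i$'' into ``$\infnorm{y}\leq 1$'' and ``$v\in t + P_i$'' into ``$\infnorm{y}\leq 1/2$''. A query to the $2$-gap oracle on the lattice $\Lambda(F_i^{-1}A)$ with target $F_i^{-1}(t+d_i)$ and threshold $1$ therefore either returns a lattice vector $v\in t+2P_i\subseteq t + D\cdot H$ -- which we output as a witness of distance at most $D$ -- or certifies that no lattice vector lies in $t + P_i$. If all $N$ queries are negative, then no lattice vector lies in $\bigcup_i(t + P_i)\supseteq t + D\cdot H_{\varepsilon'}$, hence $\min_{v\in\Lambda(A)}\infnorm{v-t} > D/(1+\varepsilon)$, and we output ``no.''

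For the approximation algorithm, discretize the candidate distances as $D_k = D_{\min}\cdot (1+\varepsilon')^k$ with $\varepsilon' = \Theta(\varepsilon)$; since $D^* := \min_{v\in\Lambda(A)}\infnorm{v-t}$ lies in a range of multiplicative width $2^{O(b)}$, there are only $K = O(b/\varepsilon)$ such values, and binary search on this grid using the $(1+\varepsilon')$-gap oracle finds the smallest $k$ for which the oracle succeeds in $O(\log K) = O(\log b + \log 1/\varepsilon)$ rounds, yielding a $(1+\varepsilon)$-approximate closest vector. The base oracle comes from setting the approximation parameter of~\cite{bn2009} to a constant, which gives a randomized $2$-approximation for $\CVP_\infty$ in time $2^{O(n)} b^{O(1)}$ and is directly a $2$-gap oracle. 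Running this base oracle $\Theta(n + \log(Nb))$ times per call and returning the best answer drives the per-call failure probability below $2^{-\Omega(n)}/(Nb)$, so a union bound across all $O(N\log(b/\varepsilon))$ invocations bounds the total failure probability by $2^{-\Omega(n)}$. Multiplying the per-oracle cost $2^{O(n)}b^{O(1)}$, the covering factor $N = 2^{O(n)}(\log 1/\varepsilon)^{O(n)}$, the binary-search factor $O(\log b + \log 1/\varepsilon)$, and the amplification factor yields the claimed running time $2^{O(n)}(\log 1/\varepsilon)^{O(n)}b^{O(1)}$.

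The main conceptual difficulty lies in the gap reduction: one must line up the linear transformations $F_i$ so that the $2$-gap oracle's inner threshold $1/2$ pulls back exactly to $P_i$ while its outer threshold $1$ pulls back to $2P_i$, so that both possible oracle answers combine correctly with the covering to yield a valid $(1+\varepsilon)$-gap response. Once this alignment is secured, the binary search and the success-probability amplification are routine.
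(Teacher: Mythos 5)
Your proposal follows essentially the same three-stage plan as the paper: (i) reduce $(1+\varepsilon)$-gap $\CVP_\infty$ to many calls of a $2$-gap oracle via the parallelepiped covering of Theorem~\ref{thr:1}, with the affine change of coordinates aligning the inner threshold to $P_i$ and the outer to $2P_i$ exactly as in Lemma~\ref{lem:ParallelepipedTest} and Theorem~\ref{th:CVP_1plusepsfrom2approx}; (ii) binary search over a geometric grid of distance thresholds, matching Theorem~\ref{thm:cvp-approximation-reduction}; (iii) instantiate the base oracle with the Bl\"omer--Naewe algorithm and amplify its success probability by repetition, as in the proof of Theorem~\ref{thm:cvp-approximation-algorithm}. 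The only differences are cosmetic or numerically slack: you quote the multiplicative range of $d(t,\Lambda)$ as $2^{O(b)}$ where the paper uses $2^{O(n^2 b)}$ (the binary-search depth $O(\log n+\log b+\log 1/\varepsilon)$ is the same either way), your binary search updates $U$ to the queried index rather than to $\lceil\log_{1+\delta}\|v-t\|_\infty\rceil$ (both yield a $(1+O(\delta))$-approximation), and your amplification factor $\Theta(n+\log(Nb))$ is larger than the paper's tighter $O(1+\log\log(1/\varepsilon)+\tfrac{1}{n}\log b)$, but it is still absorbed into $2^{O(n)}(\log 1/\varepsilon)^{O(n)}b^{O(1)}$ and drives the union-bound failure probability to $2^{-\Omega(n)}$. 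The argument is correct and is, in substance, the paper's own.
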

The randomness is due to the fact that the BN algorithm is randomized. Our construction is deterministic.

\subsection{Boosting gap solvers}
\label{sec:GapCVP}

We now describe the $(1+\varepsilon)$-gap algorithm for $\op{CVP}_{\infty}$ with the following properties.
\begin{theorem}
 \label{th:CVP_1plusepsfrom2approx}
  Given an oracle that solves $2$-gap $\op{CVP}_{\infty}$, 
  for every $\varepsilon\in (0,1]$ we can solve  $(1+\varepsilon)$-gap $\op{CVP}_{\infty}$ using at most
  $ 2^{n}\cdot \left(2 + \log 1 / \varepsilon\right)^n$
  oracle calls. 

The encoding size of instances for each oracle query are polynomial in $n$, the original encoding length and in $\log 1 / \varepsilon$.
\end{theorem}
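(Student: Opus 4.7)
The plan is to leverage the parallelepiped covering of Theorem~\ref{thr:1}. By rescaling $A$ and $t$ by $1/D$, we may assume $D=1$, so the goal is to either find $v\in \Lambda(A)$ with $\infnorm{v-t}\leq 1$ or certify that every lattice vector has $\ell_\infty$-distance greater than $1/(1+\varepsilon)$ from $t$. Setting $\varepsilon' := \varepsilon/(1+\varepsilon)$ yields $1-\varepsilon' = 1/(1+\varepsilon)$, so the second alternative is equivalent to asserting that $t+H_{\varepsilon'}$ contains no lattice vector.

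Next, I would invoke Theorem~\ref{thr:1} on $\varepsilon'$ to obtain an axis-parallel covering of $H_{\varepsilon'}$ by parallelepipeds $P_1,\ldots,P_m$ with $m\leq 2^n(1+\log 1/\varepsilon')^n$, each writable as $P_j = d_j + E_j H$ for a diagonal nonsingular $E_j$ and center $d_j\in\R^n$, and with the crucial property that the doubly scaled version $P_j^{(2)} := d_j + 2 E_j H$ is still contained in $H$. Since $\varepsilon\leq 1$ gives $1+1/\varepsilon \leq 2/\varepsilon$ and hence $\log(1+1/\varepsilon)\leq 1 + \log 1/\varepsilon$, we already obtain the desired bound $m \leq 2^n(2+\log 1/\varepsilon)^n$.

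For each $j$, I would query the $2$-gap oracle on the lattice $\Lambda(E_j^{-1}A)$, target vector $E_j^{-1}(t+d_j)$, and radius $D'=2$. Via the substitution $v = t + d_j + E_j x$, one has $v'\in \Lambda(E_j^{-1}A)$ with $\infnorm{v' - E_j^{-1}(t+d_j)}\leq 2$ (resp.\ $\leq 1$) if and only if the corresponding $v\in \Lambda(A)$ satisfies $v-t\in P_j^{(2)}$ (resp.\ $v-t\in P_j$). Hence the oracle either (a) returns $v\in\Lambda(A)$ with $v-t\in P_j^{(2)}\subseteq H$, in which case we output $v$ and are done, or (b) certifies that $t+P_j$ is free of lattice points. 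If every one of the $m$ oracle calls returns case~(b), then $t+\bigcup_j P_j \supseteq t + H_{\varepsilon'}$ contains no lattice vector, yielding the required $(1+\varepsilon)$-gap certificate.

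The only step needing real care is the correspondence in the previous paragraph: the affine substitution must convert the pair $(P_j, P_j^{(2)})$ precisely into the radius-$1$/radius-$2$ pair expected by the $2$-gap oracle; this follows because $E_j H$ and $2 E_j H$ are the images of $H$ and $2H$ under $x\mapsto E_j x$. The encoding length claim is then routine: the axis-parallel construction in the proof of Theorem~\ref{thr:1} produces diagonal $E_j$ and centers $d_j$ whose entries are rationals of the form $c\cdot 3^{-\alpha}$ with $\alpha = O(\log 1/\varepsilon)$, so forming $E_j^{-1}A$ and $E_j^{-1}(t+d_j)$ blows up the encoding by a polynomial factor in $n$, the original bit length $b$, and $\log 1/\varepsilon$.
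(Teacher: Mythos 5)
Your proof is correct and follows essentially the same approach as the paper's: rescale to $D=1$, set $\delta=\varepsilon/(1+\varepsilon)$, apply the covering of $H_\delta$ from Theorem~\ref{thr:1}, and for each parallelepiped make one $2$-gap oracle call under the affine change of coordinates (the paper packages this last step as Lemma~\ref{lem:ParallelepipedTest}). Your explicit verification that $1+\log 1/\delta \leq 2+\log 1/\varepsilon$ is a nice addition that the paper leaves implicit.
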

In fact, any constant-gap oracle could be used.
We choose to fix the approximation factor to $2$ for concreteness and to simplify the presentation.

Let $(B,~t,~D)$ be the input. To solve the $(1+\varepsilon)$-gap problem, we either
need to find a vector $v\in \Lambda(B)$ with $\infnorm{v-t}\leq D$, or assert that the box
$$ T:=t+ D\cdot[-1+\delta, 1-\delta]^n$$ 
with $1-\delta = 1/(1+\varepsilon)$ does not contain a lattice point. 
By scaling the instance, we can assume without loss of generality that $D=1$.
Hence the box $T$ is a translate of the box $H_\delta=[-1+\delta, 1-\delta]^n$.
As discussed in Section~\ref{sec:CoveringWithParallelepipeds},
there is a covering of $H_\delta$ and therefore $T$ with singly exponential many parallelepipeds.
These parallelepipeds have the property that if they are scaled by a factor of $2$ around their center of gravity, then they are still contained within $t+[-1,1]^n$.
This is useful because with one call to a $2$-approximation oracle for $2$-gap $CVP_\infty$, we can either find a lattice vector with distance at most $1$ or
assert that one of the parallelepipeds does not contain a lattice vector, as we show in the following lemma.
\begin{lemma}
\label{lem:ParallelepipedTest}
 Given a lattice $\Lambda(A)$ and a parallelepiped $P:=\{ x\in \R^n~:~\infnorm{E(x-d)}\leq 1\}$, a single call to a $2$-gap oracle for $CVP_\infty$ 
either asserts that $P\cap\Lambda(A)=\emptyset$ or find a lattice vector $v\in \Lambda(A)$ contained in $P^s:=\{ x\in \R^n~:~\infnorm{E(x-d)}\leq 2\}$,
i.e. $P$ scaled by 2 around its center of gravity $d$.
\end{lemma}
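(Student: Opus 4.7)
The plan is to reduce the problem to a single standard $\ell_\infty$ closest vector query by applying the invertible linear transformation $E$, which turns the parallelepiped $P$ into an axis-parallel cube. Note that $E$ must be non-singular, since otherwise $P$ would be unbounded.

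First, I would observe that multiplying by $E$ gives a bijection between $\Lambda(A)$ and $\Lambda(EA)$, sending $v = Ay$ (with $y\in\Z^n$) to $Ev = EAy$. Moreover, $v \in P$ exactly when $\infnorm{E(v-d)} = \infnorm{Ev - Ed} \leq 1$, and $v \in P^s$ exactly when $\infnorm{Ev - Ed} \leq 2$. In other words, the question of whether $P$ contains a lattice point of $\Lambda(A)$ is equivalent to the question of whether the $\ell_\infty$-ball of radius $1$ around $Ed$ contains a point of $\Lambda(EA)$, and similarly for $P^s$ and radius $2$.

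Next, I would invoke the $2$-gap oracle on the instance $(EA,\, Ed,\, 2)$. By definition, the oracle either returns a vector $w \in \Lambda(EA)$ with $\infnorm{w - Ed} \leq 2$, or certifies that every lattice vector of $\Lambda(EA)$ has $\ell_\infty$-distance strictly more than $1$ from $Ed$. In the first case, set $v := E^{-1} w \in \Lambda(A)$; then $\infnorm{E(v-d)} = \infnorm{w-Ed} \leq 2$, so $v \in P^s$ as required. In the second case, every $v \in \Lambda(A)$ satisfies $\infnorm{E(v-d)} > 1$, i.e.\ $v \notin P$, so $P \cap \Lambda(A) = \emptyset$.

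There is essentially no hard step: once the transformation by $E$ is set up, the gap between the radius $2$ (matching $P^s$) used to return a candidate and the halved radius $1$ (matching $P$) used to certify emptiness is exactly the $2$-gap provided by the oracle. The only minor thing to be careful about is that the oracle's input must have polynomial encoding size, but this is clear because $E$, $A$, and $d$ are part of the original input.
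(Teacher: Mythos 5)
Your proof is correct and uses exactly the same reduction as the paper: apply the linear map $E$ to turn $P$ into a unit $\ell_\infty$-ball around $Ed$ and the lattice $\Lambda(A)$ into $\Lambda(EA)$, then make a single $2$-gap oracle call. The only difference is presentational — you spell out the oracle's input $(EA, Ed, 2)$ and the round trip through $E^{-1}$, which the paper leaves implicit.
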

\begin{proof}
 Define $B:=E\cdot A$ and $t:=E\cdot d$ and observe that $P\cap \Lambda(B)\neq \emptyset$ ($P^s\cap \Lambda(B)\neq \emptyset$) if and only if there is a vector $v\in \Lambda(B)$ with distance at most $1$ ($2$) from $t$.
\end{proof}

\begin{proof}[Proof of Theorem~\ref{th:CVP_1plusepsfrom2approx}]
Let $(B,~t,~D)$ be the input. By scaling the instance, we can assume without loss of generality that $D=1$.
  Let $\delta = \frac{\varepsilon}{1+\varepsilon}$, so that $1-\delta = \frac{1}{1+\varepsilon}$.
  Our goal is to either assert that the box $T = t + [-1+\delta,1-\delta]^n$
  is empty or to find a lattice vector in $t + [-1,1]^n$.

  Let $P_1, \ldots, P_k$ with $k\leq 2^n\cdot\left(2+\log(\frac{1}{\varepsilon})\right)^n$ be parallelepipeds as in Theorem~\ref{thr:1}.
Moreover for each $i$ let $P^s_i$ be the parallelepiped $P_i$ scaled by a factor of $2$ around its center of gravity. Then $P_1, \ldots, P_k$ cover $T$ and
  $P^s_i\subseteq t+[-1,1]^n$ for each $i$.
Lemma~\ref{lem:ParallelepipedTest} shows that for each $i$, a single call to the $2$-gap $CVP_\infty$  oracle either yields a lattice vector in $t + [-1,1]^n$ or
asserts that $P_i$ does not contain a lattice vector. Since the parallelepipeds cover $T$, if the answers for all oracle calls are negative, we can assert that $T$ 
does not contain a lattice vector.

Note that the encoding size of each parallelepiped is bounded by a polynomial in $n$ and $\log 1 / \varepsilon$, so the bound for the encoding size for each oracle call holds.
\end{proof}

\subsection{Approximating the closest vector problem}
\label{sec:ApproximatinAlgorithm}
In this section we first describe a procedure to Karp-reduce the problem of computing a $(1+\varepsilon)$-approximation for $\op{CVP}_\infty$ to $(1+O(\varepsilon))$-gap $\op{CVP}_\infty$. 
Then we combine our constructions with the BN algorithm to obtain the currently fastest (randomized) $(1+\varepsilon)$-approximation algorithm for $\op{CVP}_\infty$.

\begin{theorem} \label{thm:cvp-approximation-reduction}
For every $\varepsilon\in(0,1)$ and $\delta := \min\{\varepsilon/5, 1/2\}$, given access to a $(1+\delta)$-gap $\op{CVP}_{\infty}$ oracle,
one can compute a $(1+\varepsilon)$-approximation for $\op{CVP}_\infty$ using 
$O(\log b + \log n + \log1/\varepsilon)$ calls to the oracle.
\end{theorem}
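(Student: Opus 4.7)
The plan is to perform a binary search for the right distance bound, using the $(1+\delta)$-gap oracle as a comparator. Fix $\delta=\min\{\varepsilon/5,\,1/2\}$, so that $(1+\delta)^2\leq 1+3\delta\leq 1+\varepsilon$. Let $\mu=\min_{v\in\Lambda(A)}\infnorm{v-t}$. First, test whether $t\in\Lambda(A)$ in polynomial time (e.g.\ via a Hermite normal form computation) and return $t$ if so. Otherwise $\mu>0$, and standard rationality arguments yield an explicit interval $[D_{\min},D_{\max}]$ containing $\mu$ with $\log(D_{\max}/D_{\min})=O(\mathrm{poly}(n,b))$: the upper bound comes from the trivial witness $v=0\in\Lambda(A)$, giving $\mu\leq\infnorm{t}\leq 2^{O(b)}$; the lower bound follows from the fact that every coordinate of $Ax-t$ for $x\in\Z^n$ is a rational with denominator at most $2^{O(b)}$, so any nonzero value of $\infnorm{Ax-t}$ is at least $2^{-O(b)}$.

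Next, consider the geometric sequence $D_j = D_{\max}(1+\delta)^{-j}$ for $j=0,\dots,N$, where $N=\lceil\log_{1+\delta}(D_{\max}/D_{\min})\rceil$. A query to the oracle at $D_j$ either returns a lattice vector of $\ell_\infty$-distance at most $D_j$ (call this YES), or asserts $\mu>D_j/(1+\delta)=D_{j+1}$ (NO). Two monotonicity facts let us run an ordinary binary search despite the built-in slack of the gap oracle: (i) a YES-witness $v$ obtained at index $j$ is also a valid witness at every index $k\leq j$ because $\infnorm{v-t}\leq D_j\leq D_k$; and (ii) a NO-answer at $j$ forces NO at every $k\geq j+1$ because then $D_k\leq D_{j+1}<\mu$. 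Consequently $O(\log N)$ probes suffice to locate a pair of consecutive indices $j^{*}<j^{*}+1$ for which $D_{j^{*}}$ yielded a witness $v^{*}$ and $D_{j^{*}+1}$ yielded a NO-certificate.

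Combining $\infnorm{v^{*}-t}\leq D_{j^{*}}$ with $\mu>D_{j^{*}+1}/(1+\delta)=D_{j^{*}}/(1+\delta)^2$ gives
\[
 \infnorm{v^{*}-t}\;\leq\;D_{j^{*}}\;<\;(1+\delta)^{2}\mu\;\leq\;(1+\varepsilon)\mu,
\]
so $v^{*}$ is the desired $(1+\varepsilon)$-approximation. The total oracle cost is $O(\log N)=O(\log b+\log n+\log(1/\delta))=O(\log b+\log n+\log(1/\varepsilon))$, matching the claim. The main obstacle is bookkeeping rather than conceptual: one must verify that each $D_j$ can be encoded in polynomial size, which is handled by computing $(1+\delta)^{j}$ through repeated squaring for a single $j$, or by rounding $D_j$ to a nearby dyadic rational — the latter only perturbs the effective gap by a constant factor, which is comfortably absorbed by the slack between $\delta=\varepsilon/5$ and $\varepsilon$.
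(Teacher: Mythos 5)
Your proof is correct and follows essentially the same approach as the paper: a binary search over a geometric grid of candidate distances, using the $(1+\delta)$-gap oracle as a one-sided comparator and bounding the search range by $\mathrm{poly}(n,b)$ via rationality of the input, for a total of $O(\log n + \log b + \log 1/\varepsilon)$ probes. Your bookkeeping is slightly cleaner --- seeding the search with $v=0$ as a trivial witness, terminating at $U-L=1$, and losing only $(1+\delta)^2$, whereas the paper terminates at $U-L<3$ and loses $(1+\delta)^3$ --- but both losses are comfortably absorbed by the choice $\delta=\varepsilon/5$.
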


We are given as input a lattice $\Lambda = \Lambda(A)$ and target vector $t$.
Let us assume that the distance $d(t,\Lambda)$ of a closest vector to the target vector is between $1$ and at most $2^{cn^2\cdot b}$ for some constant $c > 0$. 
This can be achieved by scaling, see~\cite{bn2009}.
We then perform a simple binary search in the following way:

\begin{enumerate}
  \item Set $\delta := \min\{\varepsilon/5, 1/2\}$
  \item Initialize $L \gets 0$ and $U \gets \left\lceil \log_{1+\delta} 2^{cn^2\cdot b} \right\rceil$.
  \item While $U-L \geq 3$, do a binary search step:
  \begin{enumerate}
    \item Solve the $(1+\delta)$-gap problem with input $(A,t,(1+\delta)^{L+\lceil(U-L)/2\rceil})$.
    \item If a lattice vector $v$ is returned, update $U \gets \left\lceil \log_{1+\delta} \norm{v-t}_\infty \right\rceil$.
    \item Otherwise, update $L \gets L + \lceil(U-L)/2\rceil - 1$.
  \end{enumerate}
  \item Solve the $(1+\delta)$-gap problem with input $(A,t,(1+\delta)^{U+1})$ and return the resulting lattice vector.
\end{enumerate}
We first prove the correctness of this procedure before we analyze its running time.

\begin{lemma}
\label{lemma:ApproxBS}
The algorithm from above has the following properties.
  \begin{enumerate}
    \item The binary search routine maintains the invariant that $(1+\delta)^L \leq d(t,\Lambda) \leq (1+\delta)^U$.
    \item The algorithm returns a lattice vector $v$ that satisfies $\norm{v-t}_\infty \leq (1+\varepsilon)d(t,\Lambda)$.
  \end{enumerate}
\end{lemma}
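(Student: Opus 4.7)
The plan is to prove the two claims in sequence, treating the invariant first since the approximation guarantee follows easily from it.

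For claim (1), I would proceed by induction on the number of binary-search iterations. Initially, $L=0$ and $U = \lceil \log_{1+\delta} 2^{cn^2 b}\rceil$, and by the preceding normalization $1 \leq d(t,\Lambda) \leq 2^{cn^2 b}$, so the invariant holds. For the inductive step, let $M = L + \lceil(U-L)/2\rceil$ and consider the query at distance $D = (1+\delta)^M$. There are two cases, directly mirroring the two outputs of a $(1+\delta)$-gap oracle:
\begin{itemize}
\item If the oracle returns $v$, then $\norm{v-t}_\infty \leq D$, so $d(t,\Lambda) \leq \norm{v-t}_\infty$, and the new upper bound $U' = \lceil \log_{1+\delta} \norm{v-t}_\infty \rceil$ satisfies $(1+\delta)^{U'} \geq \norm{v-t}_\infty \geq d(t,\Lambda)$. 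The lower bound is untouched.
\item If the oracle fails, then every lattice vector has distance strictly greater than $(1+\delta)^{-1} D = (1+\delta)^{M-1}$. Setting the new $L' = M-1$ preserves $(1+\delta)^{L'} < d(t,\Lambda)$. The upper bound is untouched.
\end{itemize}
In both cases the invariant is preserved.

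For claim (2), first observe that upon exiting the loop $U - L \leq 2$. The final oracle call uses $D = (1+\delta)^{U+1}$; by the invariant $d(t,\Lambda) \leq (1+\delta)^U = D/(1+\delta)$, so the ``no-vector'' branch of the gap oracle is logically excluded and a vector $v$ with $\norm{v-t}_\infty \leq (1+\delta)^{U+1}$ must be returned. Combining with the invariant lower bound gives
\[
\norm{v-t}_\infty \leq (1+\delta)^{U+1} \leq (1+\delta)^{L+3} \leq (1+\delta)^3\, d(t,\Lambda).
\]
It then suffices to verify $(1+\delta)^3 \leq 1+\varepsilon$ for $\delta = \min\{\varepsilon/5, 1/2\}$; since $\varepsilon \in (0,1)$ we have $\delta = \varepsilon/5$, and a direct expansion bounds $(1+\varepsilon/5)^3$ by $1 + \tfrac{91}{125}\varepsilon < 1+\varepsilon$.

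The argument is largely bookkeeping; the only subtle point, and the one I would be careful about, is the interaction between the gap oracle's strict/non-strict inequalities and the ceilings used when updating $L$ and $U$. In particular, I want to make sure that the $-1$ in the update $L \gets L + \lceil(U-L)/2\rceil - 1$ is exactly what is needed so that the lower-bound side of the invariant tracks the strict inequality delivered by a negative oracle answer, and that the rounding $U \gets \lceil \log_{1+\delta} \norm{v-t}_\infty\rceil$ keeps the upper-bound side correct while still guaranteeing actual progress (i.e., $U - L$ strictly decreases) so the loop terminates in the running time claimed in Theorem~\ref{thm:cvp-approximation-reduction}.
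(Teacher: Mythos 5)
Your proposal is correct and matches the paper's proof essentially line for line: induction on iterations for the invariant, noting that the exit condition forces $U-L\leq 2$, and then bounding $(1+\delta)^3$ by $1+\varepsilon$ (the paper goes through the intermediate bound $(1+\delta)^3\leq 1+5\delta$ while you expand directly, but the content is identical). The closing remarks about ceilings and termination go slightly beyond what Lemma~\ref{lemma:ApproxBS} asserts — progress of the search belongs to the running-time analysis in Theorem~\ref{thm:cvp-approximation-reduction} — but they are correct and show you've understood how the pieces fit together.
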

\begin{proof}
  \begin{enumerate}
    \item The initial choices of $L$ and $U$ are appropriate after scaling the lattice as mentioned in the beginning of this section.
      In the case 3(b), the existence of the lattice vector $v$ proves that the invariant is maintained by the update of $U$.
      In the case 3(c), that is, when the $(1+\delta)$-gap problem does not return a lattice vector,
      this implies by definition that $d(t,\Lambda) \geq (1+\delta)^{L+\lceil(U-L)/2\rceil-1}$ and so the invariant is maintained.

    \item In the end, we know that $d(t,\Lambda) \leq (1+\delta)^U$, so the final application of the $(1+\delta)$-gap problem is guaranteed to find a lattice vector $v$.
      This lattice vector satisfies
\[
  \norm{v-t} \leq (1+\delta)^{U+1} \leq (1+\delta)^{L+3} \leq (1+\delta)^3 d(t,\Lambda) \leq (1+5\delta) d(t,\Lambda) \leq (1+\varepsilon) d(t,\Lambda).
\]
      For the second inequality, we used the fact that $U-L$ is an integer and therefore $U-L\leq 2$. \qedhere
  \end{enumerate}
\end{proof}

\begin{proof}[Proof of Theorem~\ref{thm:cvp-approximation-reduction}]
Correctness of the procedure has already been shown in Lemma~\ref{lemma:ApproxBS}. It remains to bound the number of oracle calls.
  Let $M_j$ be the difference $U-L$ after the $j$-th search step.
  By the initial choices of $L$ and $U$ we have
\[ M_0 = \left\lceil \log_{1+\delta}2^{cn^2\cdot b} \right\rceil = \left\lceil \frac{cn^2b}{\log (1+\delta)} \right\rceil \leq c'n^2b/\delta \]
  for some constant $c'>0$.
  Let us analyze what happens in step $j$.
  In the case 3(b), we know that $\norm{v-t}_\infty \leq (1+\delta)^{\lceil L+ (U-L)/2 \rceil}$.
  Denoting the updated value of $U$ by $U'$, this implies $U' \leq L+\lceil (U-L)/2 \rceil$, and so $M_j \leq \lceil (U-L)/2 \rceil \leq M_{j-1}/2 + 1$.

  In the case 3(c), we get
\[ M_j = U - (L + \lceil (U-L)/2 \rceil - 1) = M_{j-1} - \lceil M_{j-1}/2 \rceil + 1 \leq M_{j-1}/2 + 1. \]
  We get the same upper bound in both cases and can conclude using induction that
\[ M_j \leq 2^{-j} M_0 + 1 + \frac{1}{2} + \frac{1}{4} + \dots \leq 2^{-j} M_0 + 2. \]
  This implies that the number of steps is bounded by $\lceil\log M_0\rceil$
  because the iteration stops when $M_j$ drops below $3$.
  From this we can derive the desired upper bound for the number of oracle calls.
\end{proof}

We now prove the main theorem by using the BN algorithm as a $2$-approximation and applying the boosting technique for gap $\CVP$ combined with the binary search procedure.
As their algorithm is randomized, one has to take care of the success probabilities.
Their algorithm has a failure probability of $2^{-\Omega(n)}$. Considering the amount
of oracle queries we have to issue and the requirement that \emph{every} call has to be successful, that failure probability is too high,
so we boost the success probability using standard techniques.

\begin{proof}[Proof of Theorem~\ref{thm:cvp-approximation-algorithm}]
  The BN algorithm has a success probability of at least $1-2^{-c\cdot n}$ 
  for some constant $c>0$ and a running time of $(2+\varepsilon')^{O(n)}\cdot b^{O(1)}$ when used as a $(1+\varepsilon')$-approximation algorithm.

  Set $a := c'\cdot\left\lceil  (1+\max\left\{\log \log 1 / \varepsilon,~1\right\}+\frac{1}{n}\log b)\right\rceil$ for an appropriate constant $c'>0$ that will be determined later.
  Let BN+ be an algorithm that runs BN as a $2$-approximation algorithm $a$ times on the same input
  and returns the closest vector that was found among all runs.
  This aggregated algorithm is a $2$-approximation algorithm with a running time of $\max\{\log \log\left(\frac{1}{\varepsilon}\right), 1\} \cdot 2^{O(n)} \cdot b^{O(1)}$ and success probability at least
$$ 1-2^{-acn} \geq 1-2^{-c'cn}\left(\log 1 / \varepsilon \right)^{-cc'n}\cdot b^{-cc'}.$$
  Using the boosting technique from Theorem~\ref{th:CVP_1plusepsfrom2approx}, we can construct a $(1+\delta)$-gap algorithm with $\delta := \min\{\varepsilon/5, 1/2\}$,
  using BN+ as a $2$-gap oracle. This amounts to a running time of
$ 2^{O(n)}\cdot \left(\log 1 / \varepsilon\right)^{O(n)}\cdot b^{O(1)}$
  for the $(1+\delta)$-gap algorithm.
  Plugging this as a black-box into the binary search procedure, we get a $(1+\varepsilon)$-approximation algorithm by Theorem~\ref{thm:cvp-approximation-reduction}.
  Moreover, the number of calls to the $(1+\delta)$-gap algorithm is bounded by
  $O(\log n + \log b + \log 1/\varepsilon).$
  Thus in total we get the desired running time bound of
$$2^{O(n)}(\log1/\varepsilon)^{O(n)}b^{O(1)}$$
  which is also an upper bound to the number of calls to BN+.
  The probability for failure of the $(1+\varepsilon)$-approximation algorithm is bounded by the probability that one of the runs of BN+ fails.
  By choosing $c'$ large enough, we get an upper bound of $2^{-\Omega(n)}$ for the failure probability from the union bound.

We remark that, although the encoding size for some of the instances we query the oracle for may exceed $b$, it always stays within $\op{poly}(n, \log\frac{1}{\varepsilon}, b)$, so the asymptotic running time
indicated above is not affected.
\end{proof}

\vspace{-10px}

\subsection*{Acknowledgment} 
We would like to thank  János Pach for discussions on coverings and for pointing us
to~\cite{er61}. 

\bibliography{references,papers,mybib}


\end{document}